\newcommand{\ie}{\textit{i.e.\,}}
\newcommand{\liststpaths}{\ensuremath{\mathtt{list\_induced\_paths}_{s,t}}}
\newcommand{\path}{\ensuremath{\mathtt{path}}}
\newcommand{\polylog}{\ensuremath{\mathrm{polylog\,}}}
\newcommand{\setofinducedpaths}{\ensuremath{\mathcal{P}}}
\newcommand{\setofinducedcycles}{\ensuremath{\mathcal{C}}}
\newcommand{\good}{N^{\textit{good}}}
\newtheorem{fact}{Fact}
\begin{document}

\title{Amortized $\tilde{O}(|V|)$-Delay Algorithm for Listing
  Chordless Cycles in Undirected Graphs\thanks{GS and MFS were
    partially supported by the ERC programme FP7/2007-2013 / ERC
    grant agreement no. [247073]10, and the French project
    ANR-12-BS02-0008 (Colib'read).}}

\titlerunning{Listing Chordless Cycles}  
%

\author{Rui Ferreira\inst{1}, Roberto Grossi\inst{2}, Romeo
  Rizzi\inst{3} \and Gustavo Sacomoto\inst{4,5} \and Marie-France
  Sagot\inst{4,5}}
%
\tocauthor{}
\institute{
  Microsoft Bing, UK
  \and
  Universit\`a di Pisa, Italy
  \and
  Universit\`a di Verona, Italy
  \and
  INRIA Grenoble Rh\^one-Alpes, France 
  \and
  UMR CNRS 5558 - LBBE, Universit\'e Lyon 1, France  
}

\maketitle    

\begin{abstract} 
  Chordless cycles are very natural structures in undirected graphs,
  with an important history and distinguished role in graph
  theory. Motivated also by previous work on the classical problem of
  listing cycles, we study how to list chordless cycles. The best
  known solution to list all the $C$ chordless cycles contained in an
  undirected graph $G=(V,E)$ takes $O(|E|^2 + |E| \cdot C)$ time. In
  this paper we provide an algorithm taking $\tilde{O}(|E| + |V| \cdot
  C)$ time. We also show how to obtain the same complexity for listing
  all the $P$ chordless $st$-paths in $G$ (where $C$ is replaced by
  $P$).
\end{abstract}


\section{Introduction}
\label{sec:introduction}

A \emph{chordless (induced) cycle} $c$ in an undirected graph $G$ is a
cycle such that the subgraph induced by its vertices contains exactly
the edges of $c$. A chordless cycle is called a \emph{hole} when its
length is at least 4. Similarly, a \emph{chordless (induced) path}
$\pi$ in $G$ is such that the subgraph of $G$ induced by $\pi$
contains exactly the edges of $\pi$. Both chordless cycles and paths
are very natural structures in undirected graphs with an important
history, appearing in many papers in graph theory related to chordal
graphs, perfect graphs and co-graphs
(e.g. \cite{Seinsche1974,ConfortiR92a,Chudnovsky2006}), as well as
many NP-complete problems involving them
(e.g. \cite{Chen2007,Haas2006,Kawarabayashi08}).

In this paper we consider algorithms for listing chordless cycles and
$st$-paths in an undirected graph $G = (V,E)$, with $n=|V|$ vertices
and $m = |E|$ edges, motivated by the algorithms for listing cycles
and $st$-paths that have been produced by an active area of research
since the early 70s \cite{Read1975,Syslo1981,BirmeleFGMPRS13}.

In this paper we present an algorithm for listing all the $C$
chordless cycles in an undirected graph $G=(V,E)$ in $\tilde{O}(m + n
\cdot C)$ time, hence with an amortized $\tilde{O}(n)$ time delay,
where $\tilde{O}(f(n,m))$ is used as a shorthand for $O(f(n,m) \,
\polylog n)$.  We also show that the same algorithm may be used to
list all the $P$ chordless $st$-paths in $\tilde{O}(m + n \cdot P)$
time, hence amortized $\tilde{O}(n)$ time delay.

There are very few algorithms in the literature for listing chordless
cycles and/or paths, where some of them have no guaranteed performance
\cite{Sokhn2012,Wild08}. The most notable and elegant listing
algorithm is by Uno~\cite{Uno03}, with a cost of $O(m^2 + m \cdot C)$
time for chordless cycles and $O(m^2 + m \cdot P)$ time for
chordless $st$-paths, hence amortized $O(m)$ time delay.

\section{Preliminaries}
\label{sec:preliminaries}

Our graphs are finite, undirected, and \emph{simple}, \ie without
self-loops or parallel edges.  Given a graph $G=(V,E)$ with $n=|V|$
vertices and $m=|E|$ edges, our task is to list out fast all its
chordless cycles.  We hence assume that $G$ is connected.  Given
$V'\subseteq V$, we denote by $E\langle V' \rangle := \{ uv \in E \mid
u,v\in V'\}$ the set of those edges which are contained in $V'$.  A
graph $G'=(V',E')$ is called a \emph{subgraph} of $G$ if $V'\subseteq
V$ and $E'\subseteq E$.  The subgraph $G'$ is called \emph{induced}
(or \emph{chordless}) if $E' = E\langle V' \rangle$.  For any
$V'\subseteq V$, we denote by $G[V'] := (V', E\langle V' \rangle)$ the
\emph{subgraph of $G$ induced by $V'$}.  Given $e\in E$, we denote by
$G\setminus e := (V,E \setminus \{e\})$ the subgraph obtained from $G$
by \emph{deleting} the edge $e$.  Given $v\in V$, we denote by
$G\setminus v := G[V\setminus \{v\}]$ the subgraph obtained from $G$
by first deleting all the edges incident to $v$, and then removing the
isolated vertex $v$.  Given a vertex $u \in V$, we denote by $N_G(u)
:= \{v\in V \mid uv\in E\}$ the \emph{neighbourhood} of $u$, the
subscript is omitted whenever the graph is clear from the context.

A \emph{cycle} is a connected graph in which every vertex has
degree~$2$.  A \emph{path} is a connected graph in which every vertex
has degree~$2$ except for two degree-$1$ vertices, $s$ and $t$, called
the \emph{endvertices} of the path. This is also called an
$st$-\emph{path} and denoted by $\pi_{st}$. Indeed, when building a
path from $s$ to $t$ edge after edge, it will be most natural, and
more precise, to think like we are orienting the traversed edges.  For
this reason, we will also write $(u,v)$ for an edge that, when
building a path, has been traversed from $u$ to $v$.

A (chordless) path (or cycle) of $G$ is a (chordless) subgraph of $G$
which is a path (or cycle).  We denote by $\setofinducedcycles(G)$ the
set of all chordless cycles in $G$.  We denote by
$\setofinducedpaths(G)$ (by $\setofinducedpaths_{st}(G)$) the set of
all chordless paths ($st$-paths) in $G$.  When $s=t$, we get those
cycles visiting $s$.  We refer to a path $\pi \in
\setofinducedpaths(G)$ by its natural sequence of vertices or edges.
A \emph{hole} is a chordless cycle of size at least $4$. Thus
$\setofinducedcycles(G)$ comprises holes and triangles.  Since there
are at most $mn$ triangles, our algorithm can be used to list the
holes of $G$ in $\tilde{O}(n)$ time each, with an overall
$\tilde{O}(mn^2)$ additive time cost.

Uno~\cite{Uno03} proposed an algorithm that lists each chordless cycle
in an undirected graph $G=(V,E)$ in $O(m)$ time while using $O(m)$
space. The first step is the following reduction to the problem of
enumerating the chordless $st$-path in a graph $G$. Based on the fact
that for any vertex $s \in V$ the chordless cycles in $G \setminus s$
are also chordless cycles in $G$, the algorithm proceeds by listing
all chordless cycles passing through $s$; and repeating the process in
$G \setminus s$, until the graph is empty. Then, to list all chordless
cycles passing through $s$ in $G' = G$, the algorithm follows the
approach of listing the chordless paths $s \leadsto t$ in $G'
\setminus (s,t)$, for each $t \in N_{G'}(s)$; and to avoid
duplications, at the end of each iteration the graph is updated to $G'
= G' \setminus t$.

Given a previously computed chordless $st$-path $\pi = v_0v_1 \ldots
v_l$, Uno's algorithm identifies the set of vertices $U \subseteq V$
such that each $u \in U$ is adjacent to some $v_j \in \pi$, and the
edge $(v_j,u)$ is contained in a chordless $st$-path $\pi' \neq \pi$
extending the prefix $\pi_j = v_0v_1\ldots v_j$. The algorithm is
kick-started by taking a shortest $st$-path (as a shortest path has
the property of also being a chordless path) and employs a recursive
strategy of vertex removal to avoid listing the same chordless path
multiple times. This ensures that each chordless path is listed once.
Uno's algorithm takes $O(m)$ time to compute $U$ and prepare the
recursive calls before it either outputs a new path or stops. The
total time is therefore $O(m^2 + m \cdot |\setofinducedcycles(G)|)$.

\section{Our Approach and Key Ideas}
\label{sec:approach-key-ideas}

We outline the main ideas which allow us to reduce the amortized cost
for a chordless cycle from $O(m)$ to $\tilde{O}(n)$, giving a total
$\tilde{O}(m + n \cdot |\setofinducedcycles(G)|)$ time to list all the
chordless cycles. Our approach relies on a variant of the
\emph{cleaning} operation introduced in~\cite{ConfortiR92a} to
recognize linear balanced matrices and even holes in
graphs~\cite{ConfortiCKV94,ConfortiCKV97}.

\subsection{Certificates for chordless $st$-path} 
\label{sub:certificate}
A listing algorithm usually takes the form of a recursive procedure
exploring the space of all solutions.  A key idea employed since the
first listing papers~\cite{Read1975} is to check for the existence of
at least one solution before branching, \ie before partitioning the
solution space in subspaces to be assigned to the children.  This
avoids unproductive recursive calls, \ie calls that do not list any
solution and whose overhead cost could completely dominate the cost of
reporting the solutions (\textit{e.g.} see~\cite{Uno97}).  In a
previous work~\cite{BirmeleFGMPRS13}, we stressed the notion of
certificate since, in a more refined recursive scheme, passing a
certificate of existence as an extra parameter may facilitate the work
of the children which may avoid running the existence check: if they
have a single child, they could be done by just passing the
certificate received as an input or a small adaptation of it.  We also
saw that more structural facts around the certificate could be useful.
For the case of $st$-paths~\cite{BirmeleFGMPRS13}, the certificate is
a DFS tree rooted in $s$ and reaching $t$, which contained an
$st$-path and also helped in other ways.  Until now, the certificate
was itself a solution or explicitely contained one.

Here we try out something new: what if our certificate guarantees the
existence of a solution but is \emph{not} itself a solution?  The
following fact suggests that the certificate for the existence of a
chordless $st$-path might be just any $st$-path.

\begin{fact} 
  \label{fact:path_induced}
  Given two vertices $s, t$ in $G$, there is a \emph{chordless} $st$-path
  in $G$ iff there is an $st$-path in $G$.
\end{fact}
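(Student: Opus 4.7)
The statement is a classical folklore fact, so the plan is short. The forward implication is immediate: a chordless $st$-path is by definition an $st$-path, so its existence trivially implies the existence of an $st$-path. The content is in the reverse implication, for which I would give a one-line minimality argument.

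My plan is to argue by taking an $st$-path with the \emph{minimum number of edges} (equivalently a shortest $st$-path under unit weights). Such a path exists whenever \emph{some} $st$-path exists, since the length of any $st$-path is a non-negative integer and the set of $st$-paths is nonempty by hypothesis. Call this path $\pi = v_0 v_1 \cdots v_\ell$ with $v_0 = s$ and $v_\ell = t$. I would then show $\pi$ is chordless. Suppose, for contradiction, that $\pi$ had a chord, that is, an edge $v_i v_j \in E$ with $0 \le i < j \le \ell$ and $j - i \ge 2$. Then $\pi' := v_0 v_1 \cdots v_i v_j v_{j+1} \cdots v_\ell$ is still a sequence of distinct vertices (its vertex set is a subset of that of $\pi$, with $v_{i+1}, \ldots, v_{j-1}$ removed), consecutive vertices are joined by edges of $G$ (using the chord $v_i v_j$ for the middle transition), and its length is $\ell - (j - i) + 1 < \ell$. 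This contradicts the minimality of $\pi$, so no chord can exist, and $\pi$ is a chordless $st$-path.

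There is no real obstacle here; the only point requiring any care is the trivial verification that after replacing the segment $v_i \cdots v_j$ by the edge $v_i v_j$ one still obtains a \emph{path} (distinct vertices), which follows immediately because we only delete internal vertices of $\pi$. I would not spell this out beyond a sentence. The proof would be at most a few lines long in the paper.
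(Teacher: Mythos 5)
Your proof is correct, and it is essentially the argument the paper itself relies on: the Fact is stated without a formal proof, but Section~2 already observes that a shortest $st$-path is chordless (this is how Uno's algorithm is kick-started), which is exactly your minimality/shortcut argument. The paper additionally sketches, in Section~3.3, an alternative greedy ``cleaning'' procedure that turns an arbitrary $st$-path into a chordless one, but for establishing the equivalence itself your shortest-path argument is the standard and sufficient route.
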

Thus we allow for certificates which are somewhat less refined than
actual solutions, in the same spirit that a binary heap demands a less
strict and lazy notion of order.  This is a new asset of the notion of
certificate and opens up new possibilities.

\subsection{From chordless cycles to chordless $st$-paths} 
\label{sub:duplication}
Uno~\cite{Uno03} shows how to reduce listing chordless cycles in a
graph to listing chordless $st$-paths for all edges $(s,t)$ chosen in
a specific order (see Section~\ref{sec:preliminaries}), which is
necessary to avoid duplications in the output. The initialization step
for each edge $(s,t)$ takes $O(m)$ time as it requires to find one
chordless $st$-path. This gives the $m^2$ term in the total cost of
$O(m^2 + m \cdot |\setofinducedcycles(G)|)$ for chordless cycles.

We observed in Section~\ref{sub:certificate} that any $st$-path will
suffice as a starter, as they are our certificates of choice.  This
makes a difference for the above reduction, since using dynamic graph
connectivity algorithms \cite{Kapron2013}, it is possible to maintain
a spanning tree in $O(\polylog n)$ time per edge deletion, perform
connectivity queries in $O(\polylog n)$, and more importantly obtain
an $st$-path in $\tilde{O}(n)$. It is worth noting that it is not
known how to obtain a chordless $st$-path faster than $O(m)$. Hence,
we first build the dynamic connectivity structure as preprocessing
step. Then, for each edge $(s,t)$, in the same order as Uno's
reduction, we list the chordless $st$-paths. Before calling our path
listing algorithm for edge $(s,t)$, we test if $s$ and $t$ are
connected (Fact~\ref{fact:path_induced}): if so, we call our path
listing algorithm, paying $\tilde{O}(n)$ to find one initial
$st$-path; otherwise, we skip the edge $(s,t)$ and take the next in
order. As a result, the total initialization cost is $\tilde{O}(m+kn)$
for all edges instead of $O(m^2)$, where $k$ is the number of edges
for which we find one initial $st$-path. Note that $k \leq
|\setofinducedcycles(G)|$ as each of them surely gives rise to a
chordless $st$-path whence to a distinct chordless cycle. We obtain in
this way an $\tilde{O}(m + n \cdot |\setofinducedcycles(G)|)$ time
algorithm to list chordless cycles, if we can list $st$-paths in
amortized $\tilde{O}(n)$ time each.

\subsection{Difficulty of cleaning $st$-paths} 
\label{sub:cleaning}
Given any $st$-path, as stated in Fact~\ref{fact:path_induced} we can
\emph{clean} it to obtain a chordless $st$-path in a greedy fashion:
start from $u=s$ and iteratively take a neighbour of~$u$ that is
closest to $t$ along the path. The process stops when $u=t$. The
vertices taken in this way form a chordless path. The problem is that
the cost of such a greedy traversal of the path is upper bounded by
the sum of the degrees of the vertices along it. Unfortunately, this
sum could be $\Theta(m)$ in the worst case.

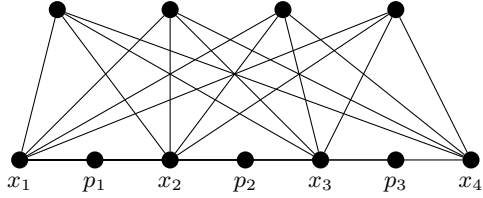
\begin{figure}
\centering
\begin{tikzpicture}[scale=1]
  \tikzstyle{vertex}=[shape=circle,draw,thick,fill=black,minimum size=2pt,inner sep=2pt]
  \node[vertex] (a) at (0.5,0) {};
  \node[vertex] (b) at (2,0)   {};
  \node[vertex] (c) at (3.5,0)  {};
  \node[vertex] (d) at (5,0)  {};
  \node[vertex,label=below:$x_1$] (x1) at (0,-2)  {};
  \node[vertex,label=below:$p_1$] (p1) at (1,-2)  {};
  \node[vertex,label=below:$x_2$] (x2) at (2,-2)  {};
  \node[vertex,label=below:$p_2$] (p2) at (3,-2)  {};
  \node[vertex,label=below:$x_3$] (x3) at (4,-2)  {};
  \node[vertex,label=below:$p_3$] (p3) at (5,-2)  {};
  \node[vertex,label=below:$x_4$] (x4) at (6,-2)  {};
\foreach \from in {a,b,c,d}
\foreach \to in {x1,x2,x3,x4}
  \path (\from) edge [-] (\to);
  \path (x1) edge[-] (p1) edge[-] (x2) edge[-] (p2) edge[-] (x3) edge[-] (p3) edge[-] (x4);
\end{tikzpicture}
\caption{Sum of degrees on chordless path $x_1, p_1,x_2, p_2, \ldots,
  x_{r-1}, p_{r-1}, x_r$ is $\Theta(m)$.}
\label{fig:secondidea}
\end{figure}

Even worse, this is still true when the initial path is already
chordless, as shown in Fig.~\ref{fig:secondidea}.  Consider the
complete bipartite clique $K_{r,r} = (V_1 \cup V_2, E_{12})$, where
$V_1 = \{ x_1, x_2, \ldots, x_r \}$. Build a new graph $G = (V,E)$
where the vertex set is $V = V_1 \cup V_2 \cup \{ p_1, \ldots,
p_{r-1}\}$ for some new vertices $p_1, \ldots, p_{r-1}$, and the edge
set is $E = E_{12} \cup \{ (x_1,p_1), (p_1,x_2), (x_2,p_2), \ldots,
(x_{r-1}, p_{r-1}), (p_{r-1}, x_r)\}$. Now, the path $x_1, p_1,x_2,
p_2, \ldots, x_{r-1}, p_{r-1}, x_r$ is chordless but each edge is
incident to at least one vertex in that path, so the sum of the
degrees is $m=|E|=\Theta(r^2) = \Theta(|V|^2) = \Theta(n^2)$.

What we would like to do: recursively extend a given chordless path
$\pi_{su}$ into a chordless $st$-path, while maintaining as a
certificate an $st$-path. The recursive extension can be seen as an
implicit cleaning of our $st$-path certificate. Consider a vertex $u$
along a given $st$-path (our certificate), where initially $u=s$. Our
certificate guarantees that there is at least one chordless $st$-path
going through a neighbour of $u$, say $a$. However, exploring all of
$u$'s neighbours would cost too much so we need to proceed more
carefully: consider any neighbour $b \neq a$, the following two
situations may occur. \emph{(1)} $a$ and $b$ are both good, meaning
that $(u,a)$ and $(u,b)$ are on two distinct chordless $st$-paths. In
this case, the chordless $st$-paths traversing $(u,a)$ cannot go
through $b$ too, as otherwise it would not be chordless (see
Remark~\ref{rem:neighbors} below), so $b$ should be
removed. \emph{(2)}~$b$ is not on any chordless $st$-path, so it is
either disconnected from $t$ or every $st$-path going through $b$
passes through $a$. In this case, as it will be clear later, we need
neither to explore nor to remove $b$.

In other words, we can treat the neighbours of $u$ as described above,
and they will not interfere when cleaning the $st$-path in the next
recursive calls since their are either removed (as in case~1) or
implicitly cut out (as in case~2). We make this statement more precise
below.

\subsection{Reduced degree property}
\label{sub:reduced-degree}
We introduce a notion of reduced degree with a stronger property in
mind.  Consider a chordless $st$-path $\pi_{st} = v_0 v_1 \dots
v_\ell$ in the graph $G$, for some integer $\ell >1$, where $v_0 = s$
and $v_\ell = t$. For a vertex $v_i$, a neighbour $v \in N(v_i)$ is
\emph{good} if there exists a chordless $st$-path in $G$ with prefix
$v_0 v_1 \dots v_i v$ (\ie it extends $v_0 v_1 \dots v_i$ by adding
the edge $(v_i,v)$ as illustrated in Fig.~\ref{fig:thirdidea}). We
denote by $\good(v_i) \subseteq N(v_i)$ the set of \emph{good}
neighbours of $v_i$, noting that $v_{i+1} \in \good(v_i)$.  For each
$v_i$, its \emph{reduced degree} $d_i$ is given by the number of
non-good neighbours, namely, $d_i = |(N(v_i) \setminus
\bigcup\limits_{j \leq i} \good(v_j)) \cup \{v_{i+1}\}|$.

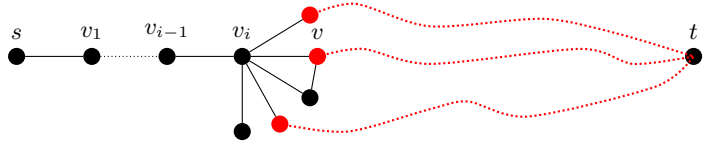
\begin{figure}[htbp]
\centering
\begin{tikzpicture}[scale=1]
  \tikzstyle{vertex}=[shape=circle,draw,thick,fill=black,minimum size=2pt,inner sep=2pt]
  \node[vertex,label=above:$s$] (s) at (0,0) {};
  \node[vertex,label=above:$v_1$] (v1) at (1,0)   {};
  \node[vertex,label=above:$v_{i-1}$] (v2) at (2,0)  {};
  \node[vertex,label=above:$v_i$] (vi) at (3,0)  {};
  \node[vertex, color=red] (a) at (3.9,0.55)  {};
  \node[vertex, color=red, label=above:$v$] (b) at (4,0)  {};
  \node[vertex, color=black] (c) at (3.9,-0.55)  {};
  \node[vertex, color=red] (d) at (3.5,-0.9)  {};
  \node[vertex, color=black] (e) at (3,-1.0)  {};
  \node[vertex,label=above:$t$] (t) at (9,0)  {};
\foreach \to in {a,b,c,d,e}
  \path (vi) edge [-] (\to); 
  
  \path (s) edge[-] (v1);
  \path (v1) edge[densely dotted] (v2);
  \path (v2) edge[-] (vi);
  \path (c) edge[-] (b);
  \draw[color=red,thick,densely dotted] plot[smooth] coordinates{(a)(4.5,0.7)(5.5,0.4)(7,0.5)(8,0.3)(t)};
  \draw[color=red,thick,densely dotted] plot[smooth] coordinates{(b)(4.6,0.1)(5.5,-0.1)(7.5,0.1)(8.2,-0.1)(t)};
  \draw[color=red,thick,densely dotted] plot[smooth] coordinates{(d)(4.4,-1)(5.5,-0.7)(6,-0.6)(6.8,-0.8)(8.5,-0.4)(t)};
\end{tikzpicture}
\caption{Good neighbours (in red) of vertex $v_i$ in $G_i$.}
\label{fig:thirdidea}
\end{figure}

The rationale is that exploring the good neighbours of $v_i$ will list
further chordless paths while examining its neighbours that are not
good is a waste of computation. The reduced degree of $v_i$ is
actually an upper bound on the number of not-good vertices examined
when exploring $v_i$ to produce the chordless $st$-path $\pi_{st}$ and
gives an upper bound on the waste. Lemma~\ref{lem:sum_degrees} below
shows that while examining the neighbours of the vertices along a
chordless path still takes $O(m)$ time, only $O(n)$ neighbours are a
waste while the remaining ones lead to further chordless paths (which
is a good argument for amortization).

\begin{lemma} 
  \label{lem:sum_degrees}
  For a chordless path $\pi_{st}$, we have $\sum_{v_i \in \pi_{st}} d_i
  \leq 2n$, where $d_i$ is the reduced degree of $v_i \in \pi_{st}$.
\end{lemma}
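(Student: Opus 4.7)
The plan is to decompose $d_i = |A_i| + 1$, where $A_i := N(v_i) \setminus U_i$ and $U_i := \bigcup_{j \leq i} \good(v_j)$. The extra $+1$ is correct because $v_{i+1} \in \good(v_i) \subseteq U_i$ (witnessed by $\pi_{st}$ itself), hence $v_{i+1} \notin A_i$, so $|A_i \cup \{v_{i+1}\}| = |A_i| + 1$. Summing gives $\sum_i d_i = \sum_i |A_i| + \ell$, so it suffices to bound $\sum_i |A_i|$. I would do this by counting, for each vertex $u \in V$, how many indices $i$ satisfy $u \in A_i$.

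For path vertices, chordlessness of $\pi_{st}$ restricts $v_k$'s neighbours inside $V(\pi_{st})$ to $\{v_{k-1}, v_{k+1}\}$. For $k \geq 1$, the path $\pi_{st}$ itself witnesses $v_k \in \good(v_{k-1}) \subseteq U_{k-1}$, ruling out $v_k \in A_{k-1}$ and $v_k \in A_{k+1}$. The vertex $v_0$ is the sole exception: it lies in $A_1$, since $v_0 \notin \good(v_0)$ (no self-loops) and $v_0 \notin \good(v_1)$ (a simple path cannot revisit $v_0$). So path vertices contribute exactly $1$ to $\sum_i |A_i|$.

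For a non-path vertex $u$, let $I(u) := \{i : u \in N(v_i)\}$, $i_0 := \min I(u)$, and $i_\infty := \max I(u)$ (with contribution $0$ if $I(u) = \emptyset$). The key subclaim is: if $i_\infty - i_0 \geq 2$, then $u \in \good(v_{i_0})$, witnessed by the shortcut $\pi' := v_0, \ldots, v_{i_0}, u, v_{i_\infty}, \ldots, v_\ell$. I would verify $\pi'$ is chordless by a four-way case analysis: prefix--prefix and suffix--suffix chords are forbidden by chordlessness of $\pi_{st}$; a prefix--suffix chord would require an adjacent pair $v_a, v_b$ with $b - a \geq i_\infty - i_0 \geq 2$, again forbidden; and a chord incident to $u$ would have to go to some $v_k$ with $k \in I(u) \subseteq [i_0, i_\infty]$, but $k \in \{i_0, i_\infty\}$ is already an edge of $\pi'$ while $i_0 < k < i_\infty$ yields a vertex absent from $\pi'$. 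Once the subclaim holds, $u \in \good(v_{i_0}) \subseteq U_i$ for every $i \geq i_0$; since $I(u) \subseteq \{i_0, i_0+1, \ldots, i_\infty\}$, this gives $u \notin A_i$ for every $i$, so $u$ contributes $0$. In the remaining case $i_\infty - i_0 \leq 1$ we have $|I(u)| \leq 2$, so $u$ contributes at most $2$.

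Combining the three kinds of contribution gives $\sum_i |A_i| \leq 1 + 2(n - \ell - 1)$, whence $\sum_i d_i \leq 2n - \ell - 1 \leq 2n$. The main technical hurdle is the chordlessness check for $\pi'$; fortunately each sub-case reduces cleanly to chordlessness of $\pi_{st}$ together with the extremal choice of $i_0$ and $i_\infty$.
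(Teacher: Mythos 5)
Your proof is correct and follows essentially the same route as the paper's: the heart of both arguments is the shortcut path $v_0\ldots v_{i_0}\,u\,v_{i_\infty}\ldots v_\ell$, shown to be chordless via the extremal choice of the two indices, which forces any vertex adjacent to two path vertices at distance at least $2$ to be good. You package this as a direct per-vertex count (treating path vertices, the $+1$ terms, and the small-spread case explicitly) rather than the paper's largest-gap contradiction, but the key construction and the resulting bound are the same.
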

\begin{proof}
  We will show that each vertex $x$ of $G$ is a non-good neighbour of
  at most two vertices in $\pi_{st}$. To this purpose, we prove that
  if $x$ is a non-good neighbour of both $v_i$ and $v_j$ then $|i-j|
  \leq 1$. We choose such three vertices $v_i,v_j$ and $x$ where the
  difference $j-i$ is the largest possible and assume by contradiction
  that $i < j-1$. Thus $v_i$ and $v_j$ are not adjacent in $\pi_{st}$,
  whence $(v_i,v_j)$ is not an edge of $G$ since $\pi_{st}$ is
  chordless. Also, being non-good, $x \notin \pi_{st}$. Consider the
  $st$-path $\pi^* = v_0 \ldots v_i x v_j \ldots v_l$. Clearly,
  $\pi^*$ contains no repeated vertices and we will prove that $\pi^*$
  is a chordless $st$-path, contradicting the fact that $x$ is not a
  good neighbour of $v_i$. The fact that $\pi^*$ is chordless follows
  from the fact that there is no $v_k \in \pi^*$, $k \neq i$ and $k
  \neq j$, such that $(v_k, x)$ is an edge of $G$, otherwise $j - k$
  or $k - i$ would be strictly larger than $j-i$, contradicting our
  choice of $v_i,v_j$ and $x$. \qed
\end{proof}

\subsection{Cleanup of current vertex}
\label{sub:cleanup-operation}
Suppose we are extending the chordless path $\pi_{su}$, while cleaning
the $st$-path certificate. We identify a good vertex $v \in N(u)$,
which closest to $t$ along the $st$-path.  Ideally, we would clean the
vertex $u$ by throwing away all its other neighbours but this could
cost $\Omega(m)$ per chordless path as illustrated in
Fig.~\ref{fig:secondidea} and discussed in
Section~\ref{sub:cleaning}. We thus perform a partial cleaning, called
\emph{cleanup}, which consists in identifying and removing, among all
neighbours of $u$ (\textit{i.e.}  $|N(u)|$ elements) only its set
$\good(u)$ of good ones.

For a given $u$ in a chordless $st$-path $\pi_{st} = v_0 \ldots v_i u
\ldots v_l$, we let emerge the good neighbours in $\good(u)$ one by
one as follows.  Consider the graph $G'$ where the vertices $v_0
\ldots v_i$ and its good neighbours were removed. If $u$ and $t$ are
not connected, then there cannot be further chordless paths from $u$
and so there cannot be further good neighbours.  Otherwise, if $u$ and
$t$ are connected, we take any path from $u$ to $t$, and select its
neighbour $v$ that appears along the path and is closest to $t$, as
illustrated in Fig.~\ref{fig:fourthidea}. After that, we remove $v$
and its incident edges, and iterate what described above until $u$ is
disconnected from $t$. The vertices $v$ thus selected form the set
$\good(u)$ of good neighbours.

\begin{figure} 
\centering
\begin{tikzpicture}[scale=1]
  \tikzstyle{vertex}=[shape=circle,draw,thick,fill=black,minimum size=2pt,inner sep=2pt]
  \node[vertex,label=above:$s$,color=lightgray] (s) at (0,0) {};
  \node[vertex,label=above:$v_1$,color=lightgray] (v1) at (1,0)   {};
  \node[vertex,label=above:$v_{i-1}$,color=lightgray] (v2) at (2,0)  {};
  \node[vertex,label=above:$v_i$,color=lightgray] (vi) at (3,0)  {};
  \node[vertex,color=red, label=above left:$u$] (u) at (4,0)  {};
  \node[vertex] (a) at (5,0)  {};
  \node[vertex] (b) at (6,0)  {};
  \node[vertex] (c) at (4.3,0.7)  {};
  \node[vertex] (d) at (4.7,-0.5)  {};
  \node[vertex,label=above:$v$] (v) at (7,0)  {};
  \node[vertex,label=above:$t$] (t) at (9,0)  {};
 
  \path (s) edge[-,color=lightgray] (v1);
  \path (v1) edge[densely dotted,color=lightgray] (v2);
  \path (v2) edge[-,color=lightgray] (vi);
  \path (vi) edge[-,color=lightgray] (u);
  \path (u) edge[-, bend left=40, color=red] (b);
  \path (u) edge[-, thick, color=red] (a);
  \path (u) edge[-, thick, color=red] (c);
  \path (u) edge[-, thick, color=red] (d);
  \path (u) edge[-, bend right=80, thick, color=blue] (v);
  \draw[color=red,thick,densely dotted] plot[smooth]
  coordinates{(a)(5.5,0.1)(b)(6.5,-0.1)(v)};
  \path (v) edge[-, thick, color=red] node[near start,below, color=black]{} (t);
  \path (v) edge[-, thick, dashed, color=blue] node[near start,below, color=black] {$\pi_{v,t}$} (t);
\end{tikzpicture}
\caption{Cleanup of the neighbours of vertex $u$.}
\label{fig:fourthidea}
\end{figure}
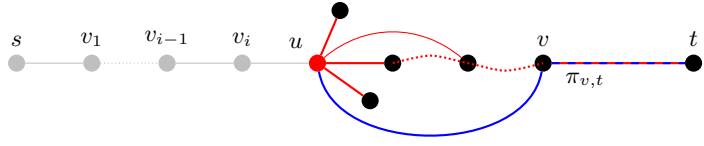

\begin{lemma}
  \label{lemma:cleanup}
  For a chordless path $\pi_{st}$, the cleanup of vertex $u \in
  \pi_{st}$ correctly produces the set $\good(u)$ of its good
  neighbours.
\end{lemma}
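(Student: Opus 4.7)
The plan is to prove both containments separately: soundness, that every vertex $v$ emitted by the cleanup lies in $\good(u)$, and completeness, that every $w \in \good(u)$ is eventually emitted. Both directions will rely on Fact~\ref{fact:path_induced} applied inside appropriate subgraphs of the current working graph $G' = G \setminus (\{v_0, \ldots, v_i\} \cup S)$, where $S$ is the set of good neighbours of $u$ already produced.

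\textbf{Soundness.} Suppose the procedure selects $v$ via a $u$--$t$ path $P = u, w_1, \ldots, w_k, t$ in $G'$, with $v = w_j$ the latest $u$-neighbour along $P$. I would pass to the subgraph $H := G' \setminus (\{u\} \cup (N_G(u) \setminus \{v\}))$: the tail $w_j\, w_{j+1} \cdots w_k\, t$ of $P$ survives in $H$ since, by the choice of $v$, none of $w_{j+1}, \ldots, w_k, t$ is a neighbour of $u$. Hence $v$ and $t$ are connected in $H$, and Fact~\ref{fact:path_induced} yields a chordless $vt$-path $\pi_{vt}$ inside $H$. I would then argue that $\pi^* := v_0 v_1 \cdots v_i \cdot u \cdot \pi_{vt}$ is a chordless $st$-path of $G$ with the required prefix $v_0 \cdots v_i\, u\, v$, by verifying chord conditions at each possible location: inside $v_0 \cdots v_i\, u$ (from chordlessness of $\pi_{st}$), inside $\pi_{vt}$ (from its own chordlessness), and across the seam, where the interior vertices of $\pi_{vt}$ lie in $H$ and are therefore disjoint from $\{v_0, \ldots, v_i, u\} \cup (N_G(u) \setminus \{v\})$, while the cleanup invariant on $G'$ precludes any surviving vertex from being adjacent in $G$ to the frozen prefix beyond what the path itself provides.

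\textbf{Completeness and main obstacle.} For completeness, take $w \in \good(u)$ together with a witnessing chordless $st$-path $\sigma = v_0 \cdots v_i \, u \, w \, z_1 \cdots z_r \, t$. Chordlessness of $\sigma$ forces $w, z_1, \ldots, z_r, t \notin \{v_0, \ldots, v_i\}$ and $z_1, \ldots, z_r, t \notin N_G(u)$, so none of $z_1, \ldots, z_r, t$ is ever removed by the procedure (which only deletes prefix vertices and good neighbours of $u$). Consequently, as long as $w$ itself has not yet been selected, the path $u\, w\, z_1 \cdots z_r\, t$ persists in $G'$ and exhibits $w$ as its unique --- hence latest --- $u$-neighbour; if the procedure halted without emitting $w$, the existence of this path would contradict the termination condition that $u$ and $t$ are disconnected in $G'$. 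The hard part, I expect, is the seam argument inside soundness: the cleanup never explicitly inspects edges between the frozen prefix $v_0 \cdots v_i$ and the tail of $\pi^*$, so the proof must make precise --- and justify by induction on earlier cleanup rounds (Section~\ref{sub:cleanup-operation}) --- the invariant that $G'$ contains no vertex adjacent in $G$ to the frozen prefix beyond what the path itself uses. Once that invariant is in hand, both inclusions drop out cleanly.
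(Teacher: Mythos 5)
Your two-containment structure is the same as the paper's, and your completeness direction is essentially the paper's first part: the paper argues the contrapositive (any neighbour left out of $S$ at termination cannot reach $t$ in $G'$ without crossing $S$, hence is not good), while you argue directly that the suffix of a witnessing chordless $st$-path survives every deletion; both are correct, though you should also note that the suffix avoids the removed sets $\good(v_j)$ for $j\le i$, which follows because $\good(v_j)\subseteq N_G(v_j)$ and chordlessness of $\sigma$ forbids $z_k\in N_G(v_j)$. Your soundness construction --- delete $u$ and $N_G(u)\setminus\{v\}$, observe that the tail of $P$ keeps $v$ connected to $t$, and invoke Fact~\ref{fact:path_induced} --- is a clean variant of the paper's argument that any induced subpath of $(u,v)\pi_{vt}$ must use the edge $(u,v)$; both yield a chordless $ut$-path in the reduced graph beginning with $(u,v)$.

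The genuine gap is exactly where you locate it, but the fix you propose cannot work: the invariant that ``$G'$ contains no vertex adjacent in $G$ to the frozen prefix beyond what the path itself uses'' is false by design. The cleanup removes from $G'$ only the \emph{good} neighbours of $v_0,\dots,v_i$; the non-good ones are deliberately left in place, since removing all neighbours of the path is precisely the $\Omega(m)$-per-path full cleaning that Section~\ref{sub:cleaning} and Fig.~\ref{fig:secondidea} show is too expensive. Hence $\pi_{vt}$ may well contain a surviving vertex $y$ with $y\in N_G(v_j)$ for some $j\le i$, and no induction on earlier cleanup rounds will make such vertices disappear. The paper resolves this differently: the cleanup lemma is only asked to certify that $v$ lies on a chordless $ut$-path \emph{in the reduced graph} $G'$, and the compatibility of that suffix with the frozen prefix --- i.e., that non-good neighbours of earlier path vertices can never actually be used to complete an output path --- is established separately, in the discussion preceding Lemma~\ref{lem:only_induced} and in Lemma~\ref{lem:only_induced} itself, via the connectivity test of line~\ref{alg:query}. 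You should either restate your soundness claim relative to $G'$ and defer the seam to that later lemma, or reproduce its chord-closest-to-$s$ argument here; as written, the step you flag as ``the hard part'' rests on a premise that contradicts the algorithm's central design decision.
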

\begin{proof}
  Given a chordless $st$-path $\pi_{st} = v_0 v_iu \dots v_\ell$ in
  the graph $G$, for some integer $\ell >1$, where $v_0 = s$ and
  $v_\ell = t$.  Let $G'$ be the subgraph of $G$ where the vertices
  $\{v_0, \ldots, v_i\}$ and their good neighbours were removed. Let
  $S \subseteq N(v_i)$ be the set of vertices removed by the cleanup
  procedure for $u$ in $G'$. We divide the proof in two parts. We
  first show that $S$ contains $\good(v_i)$, and then show that
  $\good(v_i)$ contains $S$.

  Clearly, for the first part, it is enough to show that $N_{G'}(v_i)
  \setminus S$ are not good neighbours. The vertices of $N_{G'}(v_i)
  \setminus S$ cannot reach $t$ in $G'$ without passing through some
  vertex in $S$, otherwise $u$ would still be connected to $t$ and the
  procedure would not stop. This implies that there is no chordless
  $ut$-path in $G'$ using some vertex of $N_{G'}(v_i) \setminus S$,
  \textit{i.e.} they are not good neighbours of $v_i$.

  Finally, for the second part, it is enough to show that for each $w
  \in S$ there is a chordless $ut$-path in $G'$ passing through
  $w$. Consider the iteration where $w \in N_{G'}(u)$ was added to $S$
  and let $S' \subseteq S$ be the set of vertices added in previous
  iterations. We have that $w$ is the vertex closest to $t$ in the path
  $\pi = u \leadsto t$ in $G' \setminus S'$. We claim that any subpath
  of $(u,w)\pi_{wt}$ in $G' \setminus S'$ contains $(u,w)$, where
  $\pi_{wt} = w \leadsto t$ is a suffix of $\pi$. Thus implying that
  $w$ is contained in an induced $ut$-path in $G'$. Indeed, the edge
  $(u,w)$ is not contained in a subpath iff there is a vertex $x \in
  N_{G'}(u)$ in $\pi_{wt}$. By construction, $\pi_{wt}$ does not
  contain any vertex of $S'$; and by the choice of $w$, $\pi_{wt}$
  does not contain any vertex of $N_{G'}(u) \setminus S'$. \qed
\end{proof}

\section{Listing Algorithm}
\label{sec:listing-algorithm}

We blend the key ideas discussed in
Section~\ref{sec:approach-key-ideas} to get
Algorithm~\ref{alg:improved}, which has four parameters as input and
lists all the chordless $st$-paths: the first parameter is the
chordless path $\pi_{su}$ partially built from $s$ to the current
vertex $u$ (initially, $u=s$), which is the second parameter; the
third parameter is a $ut$-path $\pi_{ut}$ that plays the role of
certificate by Fact~\ref{fact:path_induced}; the fourth parameter is
the reduced graph $G$, which changes with the recursive calls.

\begin{algorithm} 
\caption{$\liststpaths(\pi_{su}, u, \pi_{ut}, G)$} \label{alg:improved}
\uIf{$u = t$}{ 
  output($\pi_{su}$) \\ \label{alg:output}
}\Else{
  $S := \emptyset$ \\ \label{alg:init}
  \While{true}{ \label{alg:loop:begin}
    $v := $ the vertex in $\pi_{ut} \cap N(u)$ that is closest to $t$ in $\pi_{ut}$ \\ \label{alg:last_vertex}
    $\pi_{vt} := $ the subpath of $\pi_{ut}$ from $v$ to $t$\\   \label{alg:path_suffix}
    $S := S \cup \{(v,\pi_{vt})\}$ \\
    remove $v$ and its incident edges from $G$ \\ \label{alg:update1}
    \lIf{$u$ and $t$ are not connected}{{\bf break}}\\ \label{alg:query}
    $\pi_{ut} := $ any path from $u$ to $t$ \\  \label{alg:path}
  } \label{alg:loop:end}
  \ForEach{$(v, \pi_{vt}) \in S$}{ \label{alg:recursion:begin}
    adds back $v$ and its incident edges to $G$ \\ \label{alg:add}
    $\liststpaths(\pi_{su} \cdot (u,v), v, \pi_{vt}, G)$ \\ \label{alg:recursive_call}
    remove $v$ and its incident edges from $G$ \\ \label{alg:remove}
  } \label{alg:recursion:end}
} 
\end{algorithm}

The algorithm outputs a chordless $st$-path if $u=t$
(line~\ref{alg:output}). Otherwise, it performs a cleanup of $u$ (the
loop at lines~\ref{alg:loop:begin}--\ref{alg:loop:end}). After that,
it explores only the good neighbours recursively as they will surely
lead to further chordless paths (the other loop at
lines~\ref{alg:recursion:begin}--\ref{alg:recursion:end}). Observe
that $S$ stores the good neighbours $v$ of $u$ and a $vt$-path for
each of them: when performing the recursive call at
line~\ref{alg:recursive_call}, only one of the vertices in $S$ appears
in the reduced graph $G$ passed as a parameter to the recursive call
(see lines~\ref{alg:add} and~\ref{alg:remove} that guarantee this, and
Remark~\ref{rem:neighbors} below). Hence, the recursive call now has
as parameters the chordless $sv$-path $\pi_{su} \cdot (u,v)$ ending in
$v$, and a $vt$-path that guarantees that a chordless $st$-path exists
and has $\pi_{su} \cdot (u,v)$ as a prefix. This recursive call lists
all the chordless $st$-paths that share this prefix.

\begin{figure}
\centering
\begin{tikzpicture}[scale=1]
  \tikzstyle{vertex}=[shape=circle,draw,thick,fill=black,minimum size=2pt,inner sep=2pt]
  \node[vertex,label=above:$v_0$] (v0) at (0,0) {};
  \node[vertex,label=above:$v_1$] (v1) at (1,1)   {};
  \node[vertex,label=above:$v_2$] (v2) at (2,1)  {};
  \node[vertex,label=above:$v_3$] (v3) at (3,0)  {};
  \node[vertex,label=above:$v_4$] (v4) at (4,0)  {};
  \node[vertex,label=below:$v_5$] (v5) at (1,-1)  {};
  \node[vertex,label=below:$v_6$] (v6) at (2,-1)  {};
  \path (v0) edge [-, bend left=10] (v1);
  \path (v0) edge [-] (v3);
  \path (v0) edge [-, bend right=10] (v5);
  \path (v0) edge [-, bend left=10] (v6);
  \path (v1) edge [-] (v2);
  \path (v2) edge [-, bend left=10] (v3);
  \path (v3) edge [-] (v4);
  \path (v3) edge [-, bend left=10] (v6);
  \path (v4) edge [-, bend left=25] (v6);
  \path (v5) edge [-] (v6);
\end{tikzpicture}~~~~~~~~~~~~\begin{tikzpicture}[scale=1]
  \tikzstyle{vertex}=[shape=circle,draw,thick,fill=black,minimum size=2pt,inner sep=2pt]
  \node[vertex,label=above:$v_0$] (v0) at (0,0) {};
  \node[vertex,label=above:$v_2$] (v2) at (3,0)  {};
  \node[vertex,label=above:$v_4$] (v4) at (4,0)  {};
  \node[vertex,label=below:$v_1$] (v1) at (2,-1)  {};
  \node[vertex,label=below:$v_3$] (v3) at (3,-1)  {};
  \node[vertex,label=below:$v_5$] (v5) at (4,-1)  {};
  \path (v0) edge [-] (v2);
  \path (v2) edge [-] (v3);
  \path (v2) edge [-] (v4);
  \path (v0) edge [-, bend right=10] (v1);
  \path (v1) edge [-] (v3);
  \path (v3) edge [-] (v5);
  \path (v4) edge [-] (v5);
\end{tikzpicture}
\caption{Two example graphs where $s=v_0$ and $t=v_4$.}
\label{fig:example}
\end{figure}
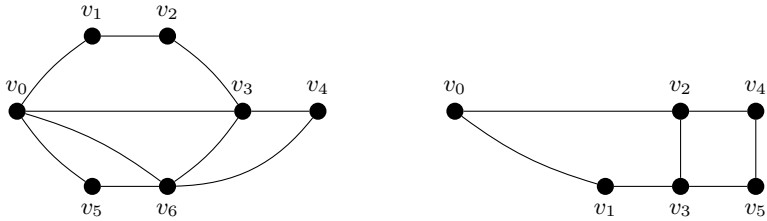

For example, let us run Algorithm~\ref{alg:improved} on the input
graph shown on the left of Fig.~\ref{fig:example}, with $u=s=v_0$ and
the initial path $\pi_{ut} = v_0v_1v_2v_3v_4$.  It computes the pairs
$(v, \pi_{vt})$ in $S$ as follows. First, $(v_3, v_3v_4)$ is added to
$S$ as $v_3$ is a good neighbour for $v_0$ (the neighbour closest to
$t$ in the path), and the edges incident to $v_3$ are removed. After
this removal, $s=v_0$ is still connected to $t=v_4$ through the path
$v_0v_5v_6v_4$, which becomes the input for the next iteration of the
while loop. Next, $(v_6, v_6v_4)$ is added to $S$ as $v_6$ is another
good neighbour, and the edges incident to $v_6$ are removed
disconnecting $s$ from $t$, so the while loop ends.  The recursive
calls in the foreach loop give the two chordless paths $v_0v_3v_4$ and
$v_0v_6v_4$ contained in the graph.

\begin{remark}
  \label{rem:neighbors}
  It is important to run the recursive calls with all good neighbours
  in $S$ removed except one. If we left two or more good neighbours in
  the recursive call of line~\ref{alg:recursive_call}, they could
  interfere with each other and we might not obtain the chordless paths
  correctly. A very simple example is given in the graph shown on the
  right of Fig.~\ref{fig:example}. Consider for instance the case
  where Algorithm~\ref{alg:improved} would be given as input the path
  $v_0 v_1 v_3 v_2 v_4$. The pair $(v_2, v_2v_4)$ is added to $S$ and
  $v_2$ removed. After that, the path $v_0 v_1 v_3 v_5 v_4$ is found
  and the pair $(v_1, v_1v_3v_5v_4)$ is added to~$S$ and $v_1$
  removed. Since $v_0$ and $v_4$ become disconnected, $S$ contains all
  the good neighbours of $v_0$. Algorithm~\ref{alg:improved} executes
  the recursive calls with $S$.  Suppose that we keep both good
  neighbours $v_1$ and $v_2$ in $G$ during these calls, in particular
  for the call with the pair $(v_1, v_1v_3v_5v_4)$ from $S$. This call
  will extend in a nested call the chordless path to $\pi_{su} = v_0 v_1
  v_3$ for $u=v_3$, and will claim that the good neighbours of $v_3$
  are $v_2$ and $v_5$, which is incorrect since $v_0 v_1 v_3 v_2$ is
  not chordless. This situation does not arise if $v_2$ is kept deleted
  in $G$ when the recursive call on $v_1$ is performed as done in
  Algorithm~\ref{alg:improved}.
\end{remark}

The correctness of Algorithm~\ref{alg:improved} follows mostly from
Lemma~\ref{lemma:cleanup}. Recall that, it guarantees that for a given
path prefix $\pi_{su}$ the set $S$ contains the good neighbours of
$u$, \textit{i.e.} the neighbours of $u$ that belong to at least one
chordless $st$-path extending $\pi_{su}$. Clearly, we only have to
recursively call the algorithm for these neighbours, the others
certainly lead to no solution. This implies that
Algorithm~\ref{alg:improved} tries all the possibilities to extend
$\pi_{su}$, so all chordless $st$-paths are output. Moreover, since
each good neighbour of $u$ leads to a different extension, we have
that no $st$-path is output more than once.

Certainly only $st$-paths are output by Algorithm~\ref{alg:improved},
but at this point we have no guarantees that the paths are indeed
chordless. In fact, after building $S$, the algorithm proceeds to
recursively extend the prefix $\pi_{su}\cdot (u,v)$ for each $v \in S$
in the graph $G' = G \setminus (S \setminus \{v\})$. However, since
$u$ was included in current path none of its neighbours can be used
later in the recursion. The algorithm removes the good neighbours of
$u$ from $G$, but the other neighbours, $N_G(u) \setminus S$, are
still present in $G'$. They could thus be used to extend the path
later in the recursion, resulting in a non-chordless
$st$-path. Lemma~\ref{lem:only_induced} shows that this cannot happen.

\begin{lemma} \label{lem:only_induced}
  The $st$-paths output by Algorithm~\ref{alg:improved} are chordless.
\end{lemma}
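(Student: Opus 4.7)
My plan is an induction on the recursion depth of Algorithm~\ref{alg:improved}, with the invariant that at every call $\liststpaths(\pi_{su}, u, \pi_{ut}, G)$ the current partial path $\pi_{su}$ is chordless. The base case is the initial call, where $\pi_{su}$ is the single vertex $s$ and is vacuously chordless. When $u = t$ (line~\ref{alg:output}), the output $\pi_{su}$ is chordless by the invariant, which is exactly what the lemma claims.

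For the inductive step, fix a call with $\pi_{su} = v_0 v_1 \dots v_i$ chordless and $u = v_i$, and consider an iteration of the foreach loop (lines~\ref{alg:recursion:begin}--\ref{alg:recursion:end}) selecting some $(v,\pi_{vt}) \in S$. It suffices to show that $\pi_{su} \cdot (u,v)$ is chordless, \ie, that $v$ is not adjacent to any $v_k$ with $k \leq i-1$. I invoke Lemma~\ref{lemma:cleanup} to conclude that $S = \good(u)$. By the definition of $\good(u)$, there exists a chordless $st$-path with prefix $v_0 \dots v_i v$; since every prefix of a chordless path is chordless, $\pi_{su} \cdot (u,v)$ is chordless, so the invariant is preserved across the recursive call on line~\ref{alg:recursive_call}.

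The main subtlety is the one flagged in Remark~\ref{rem:neighbors}: the non-good neighbours in $N_G(u) \setminus S$ remain in $G$ throughout the descent, and a priori they might reappear later in the path and give rise to a chord with $u$. The induction dispels this: any vertex $w$ later inserted by a deeper call on some $v_j$ (with $j > i$) must, by the same Lemma~\ref{lemma:cleanup}, lie in $\good(v_j)$, hence it extends $v_0 \dots v_j$ to a chordless $st$-path; if this $w$ were adjacent to $v_i = u$, the edge $(v_i, w)$ would be a chord of that extension, contradicting $w \in \good(v_j)$. So no non-good neighbour of $u$ can ever be chosen further down the recursion.

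The main obstacle in turning this sketch into a formal proof is to bridge the abstract setting of Lemma~\ref{lemma:cleanup} and the concrete graph maintained by Algorithm~\ref{alg:improved}, in which the ancestors $v_0, \dots, v_{i-1}$ are not explicitly removed. The missing sublemma is that the vertices actually deleted by ancestor cleanups, \ie the good neighbours of $v_j$ for $j < i$ other than the one selected for descent, cannot participate in any chordless $st$-path extending $\pi_{su}$: every such deleted $w$ is adjacent to some $v_j$ with $j \leq i-1$, and placing $w$ past $v_i$ would create the chord $(v_j, w)$. This closes the gap and ensures that Lemma~\ref{lemma:cleanup} applied to the algorithmic $G$ returns exactly the neighbours whose selection keeps the path chordless, which is what the whole induction needs.
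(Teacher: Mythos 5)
Your reduction of the lemma to the inclusion $S \subseteq \good(u)$ is sound in outline: if every vertex ever added to $S$ were a good neighbour, then every prefix built by the recursion would be a prefix of a chordless $st$-path, and your invariant would close the induction. The genuine gap is that this inclusion is essentially equivalent to the lemma itself, and Lemma~\ref{lemma:cleanup} does not deliver it for the graph the algorithm actually holds. Lemma~\ref{lemma:cleanup} is proved in an idealized graph $G'$ from which the whole prefix $v_0,\dots,v_{i-1}$ \emph{and all} good neighbours of its vertices have been removed; the algorithm's graph still contains $v_0,\dots,v_{i-1}$ and all the non-good neighbours of the ancestors. Those retained vertices are harmless for the direction $S \supseteq \good(u)$ (completeness), which is how the paper uses Lemma~\ref{lemma:cleanup}, but they are exactly what threatens the direction you need: a path from $u$ to $t$ found at line~\ref{alg:path} could a priori detour through $v_{i-1}$ and escape through a non-good neighbour of an ancestor, so the cleanup could select a vertex that is not good, or could fail to disconnect $u$ from $t$ when it should. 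Ruling this out requires showing that non-good neighbours of ancestors cannot reach $t$ once the good ones are deleted, and you do not prove this. Your proposed ``missing sublemma'' (that the \emph{deleted} good neighbours of ancestors lie on no chordless extension) addresses the opposite concern --- that the deletions lose no solutions --- and says nothing about the vertices that were \emph{not} deleted. Your third paragraph is likewise circular: it excludes non-good neighbours of $u$ from reappearing by invoking $S \subseteq \good(v_j)$ at deeper calls, which is again the unproved claim.

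The paper avoids all of this with a direct, self-contained argument that never goes through $\good(u)$: supposing an output path $\pi_{st}$ has a chord, take the chord $(x,y)$ with $x$ closest to $s$ and examine the call for the prefix $\pi_{sx}$ with graph $G_x$; the path $(x,y)\pi_{yt}$, where $\pi_{yt}$ is the suffix of the output path, lies entirely in $G_x$ and avoids the set $S$ computed there, so $x$ is still connected to $t$ when the while loop declares disconnection at line~\ref{alg:query} --- a contradiction. If you want to keep your inductive scheme, you would need to prove the analogue of Lemma~\ref{lemma:cleanup} for the algorithm's actual graph, and the natural way to do so is essentially this minimal-chord argument; as it stands, your proof assumes the hard part.
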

\begin{proof}
  We proceed by contradiction. Suppose $\pi_{st}$ is output by the
  algorithm and it is not chordless. This means that there exists a pair
  of vertices $x,y \in \pi_{st}$ such that $x \neq y$, $(x,y)$ is an
  edge of $G$, and $(x,y) \notin \pi_{st}$.  We can assume the edge
  $(x,y)$ is such that $x$ is the vertex closest to $s$. Now, consider
  the recursive call corresponding to the prefix $\pi_{sx}$: let $G_x$
  be the associated graph and $z$ the next vertex of $\pi_{st}$.  The
  suffix $\pi_{xt}$ of $\pi_{st}$ must pass through $z$, since $x$ is
  not connected to $t$ in $G_x - S$.  So, $y$ is closest to $t$ than
  $z$ in $\pi_{st}$. Thus, the path $(x,y)\pi_{yt}$, where $\pi_{yt}$
  is a suffix of $\pi_{st}$, avoids $S$ in $G_x$. This contradicts the
  test in line~\ref{alg:query}. \qed
\end{proof}

The previous lemma leads to the following theorem.

\begin{theorem}
  \label{the:correctness}
  The algorithm correctly outputs all chordless $st$-paths of $G$.
\end{theorem}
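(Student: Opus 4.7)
My plan is to split the statement into three claims and prove each in turn: (i) soundness, namely every output is a chordless $st$-path; (ii) completeness, namely every chordless $st$-path of $G$ is output at least once; and (iii) non-duplication, namely no chordless $st$-path is output twice. Termination has to be argued in passing as well, but this is immediate because every recursive call appends a new vertex $v$ to the current prefix and $v$ is removed from the graph passed into that call, so each root-to-leaf branch has length at most $n$.

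For soundness I would simply cite Lemma~\ref{lem:only_induced}: whenever the algorithm reaches the base case $u=t$ on line~\ref{alg:output}, the prefix $\pi_{su}$ has been extended only by appending good neighbours, and Lemma~\ref{lem:only_induced} already shows that the resulting $st$-path is chordless. For completeness I would proceed by induction on the length $\ell-i$ of the chordless suffix, with the inductive statement: for any invocation $\liststpaths(\pi_{su}, u, \pi_{ut}, G)$ and any chordless $st$-path $\pi_{st} = v_0 \ldots v_i u \, w \, v_{i+2} \ldots v_\ell$ of the original graph whose prefix up to $u$ coincides with $\pi_{su}$ and whose continuation lies in $G$, the algorithm eventually outputs $\pi_{st}$. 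The key step uses Lemma~\ref{lemma:cleanup}: the set $S$ built in lines~\ref{alg:loop:begin}--\ref{alg:loop:end} is exactly $\good(u)$ in $G$, so in particular the next vertex $w$ along $\pi_{st}$ appears in $S$. When the foreach loop reaches the pair $(w, \pi_{wt}) \in S$, it calls the algorithm on $\pi_{su} \cdot (u,w)$ with the graph $G$ restricted to $V \setminus (S \setminus \{w\})$; the inductive hypothesis then applies to that recursive call, provided the suffix $w v_{i+2} \ldots v_\ell$ still lies in this restricted graph, which follows because $\pi_{st}$ is chordless and $S \setminus \{w\} \subseteq N(u)$, so none of $v_{i+2}, \ldots, v_\ell$ can belong to $S \setminus \{w\}$ (exactly the observation formalised in Remark~\ref{rem:neighbors}).

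For non-duplication I would argue that the branching of the algorithm is a tree whose root-to-leaf paths are in bijection with the outputs. Within a single invocation, the pairs in $S$ have distinct first coordinates (each $v$ is removed from $G$ immediately after being added to $S$ at line~\ref{alg:update1}), so the recursive calls produced by the foreach loop carry pairwise distinct prefixes $\pi_{su} \cdot (u, v)$. An easy induction on recursion depth then shows that two distinct outputs must diverge at some recursive call and hence correspond to different leaves of the recursion tree.

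The main obstacle, and the place where I would write the proof most carefully, is the completeness step: I have to be sure that restricting $G$ to $V \setminus (S \setminus \{w\})$ before the recursive call does not eliminate any chordless continuation of $\pi_{su} \cdot (u,w)$. This is precisely the subtlety underlined by Remark~\ref{rem:neighbors}, and the argument above leverages chordlessness of $\pi_{st}$ together with $S \setminus \{w\} \subseteq N(u)$ to rule this out. Once this is established, soundness, completeness, and uniqueness combine to give Theorem~\ref{the:correctness}.
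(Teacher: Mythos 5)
Your proposal is correct and follows essentially the same route as the paper: soundness is delegated to Lemma~\ref{lem:only_induced}, while completeness and non-duplication rest on Lemma~\ref{lemma:cleanup} guaranteeing that $S$ is exactly the set of good neighbours, each yielding a distinct extension. You merely make explicit (via the induction on the suffix and the observation that $S\setminus\{w\}\subseteq N(u)$ cannot meet the chordless continuation) what the paper leaves as an informal paragraph together with Remark~\ref{rem:neighbors}.
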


\begin{theorem}
  \label{the:complexity}
  The algorithm takes $O(m + |\setofinducedpaths_{st}(G)|(t_p + nt_q +
  nt_u))$ time, where $t_p$ is the cost of choosing any path from any
  two given vertices, $t_q$ is the cost of checking if any given two
  vertices are connected or not, and $t_u$ is the cost of
  removing/adding back any given edge.
\end{theorem}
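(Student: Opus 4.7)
The plan is to view the execution of Algorithm~\ref{alg:improved} as a tree of recursive calls and to charge the overall running time to three countable classes of basic operations: path findings (line~\ref{alg:path} plus the parameter supplied at the root of each call), connectivity queries (line~\ref{alg:query}), and edge add/removes (lines~\ref{alg:update1}, \ref{alg:add}, \ref{alg:remove}). By Lemma~\ref{lemma:cleanup}, the set $S$ built in the while loop of a call coincides with $\good(u)$, so every internal node has exactly $|\good(u)|$ children and the leaves are in bijection with the $|\setofinducedpaths_{st}(G)|$ chordless $st$-paths produced at line~\ref{alg:output}. Each recursive call extends $\pi_{su}$ by one vertex, so the tree has depth at most $n$ and $O(n\cdot|\setofinducedpaths_{st}(G)|)$ nodes in total.

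For path findings, at each internal node the first $\pi_{ut}$ is supplied as a parameter and line~\ref{alg:path} runs only when the iteration does not break at line~\ref{alg:query}; this gives exactly $|\good(u)|-1$ path findings per internal node. Summing over the internal nodes yields $\sum_{u}(|\good(u)|-1)=E-I=L-1$, where $E$, $I$, $L$ denote the number of edges, internal nodes, and leaves of the recursion tree; hence the total path-finding cost is $O(|\setofinducedpaths_{st}(G)|\cdot t_p)$. For connectivity queries, one is performed per while iteration, so their total is $\sum_{u}|\good(u)|=E\le O(n\,|\setofinducedpaths_{st}(G)|)$, giving $O(n\,|\setofinducedpaths_{st}(G)|\cdot t_q)$.

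The edge-operation count is the delicate part. Every $v\in\good(u)$ undergoes at most three vertex manipulations across the while and foreach loops, and each manipulation performs one $t_u$-cost update per incident edge. The naive bound $\sum_{v\in\good(u)}\deg(v)\cdot t_u$ per call can be $\Theta(m)$, so amortization is needed. The plan here is to appeal to Lemma~\ref{lem:sum_degrees}: along any chordless $st$-path $\pi$, the sum of reduced degrees is at most $2n$, which one uses to charge the edge operations occurring on the root-to-leaf chain of the recursion tree associated with $\pi$ to $\pi$ itself, at a per-path cost of $O(n)$. Summing over the $|\setofinducedpaths_{st}(G)|$ chordless $st$-paths yields $O(n\,|\setofinducedpaths_{st}(G)|\cdot t_u)$. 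The $O(|\pi_{ut}|)$ scan at line~\ref{alg:last_vertex} is folded into the preceding path finding (the path-finding subroutine can, at no asymptotic extra cost, tag the positions of $N(u)$ along $\pi_{ut}$), and the scan of the initial parameter path is absorbed by an analogous hand-off from the parent call.

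Combining everything with the $O(m)$ preprocessing needed to initialize the dynamic-connectivity structure produces the announced $O(m+|\setofinducedpaths_{st}(G)|(t_p+n t_q+n t_u))$ bound. The principal technical difficulty, as indicated, lies in the amortized edge-operation argument: a direct degree sum is a factor of $n$ too large, and the reduced-degree property of Lemma~\ref{lem:sum_degrees} is what absorbs the degree dependence into an $O(n)$ per-path charge.
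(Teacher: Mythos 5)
Your overall architecture --- recursion tree, three classes of chargeable operations, Lemma~\ref{lem:sum_degrees} as the amortization engine --- matches the paper's, and your counts of path findings and connectivity queries are correct and in fact slightly cleaner than the paper's: observing that line~\ref{alg:path} executes exactly $\beta(r)-1$ times at a node with $\beta(r)$ children, so that the total number of path findings telescopes to the number of leaves, avoids the paper's explicit redistribution of branching-node costs onto the closest non-unary descendants (Lemma~\ref{lem:branching_total_cost}).

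The gap is in the edge-operation argument, exactly the part you flag as delicate. Lemma~\ref{lem:sum_degrees} bounds $\sum_{v_i\in\pi} d_i$, the reduced degrees of the vertices \emph{on} a single chordless path $\pi$. The edge updates performed at a node with current vertex $u$ are $\sum_{v\in\good(u)}\deg(v)\cdot t_u$ over \emph{all} good neighbours of $u$, and at a branching node all but one of these are off-path for any fixed root-to-leaf chain. Your plan to ``charge the edge operations occurring on the root-to-leaf chain associated with $\pi$ to $\pi$ itself, at a per-path cost of $O(n)$'' therefore fails at branching nodes: a single call can perform $\Theta(n^2)$ edge updates (take $u$ with $\Theta(n)$ good neighbours, each of degree $\Theta(n)$), and no invocation of Lemma~\ref{lem:sum_degrees} charges that to one path at cost $O(n)$. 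The missing ingredient is a separate amortization for branching nodes: since $\sum_{r\,\mathrm{branching}}\beta(r)=O(|\setofinducedpaths_{st}(G)|)$ (each child beyond the first contributes a new leaf), the crude bound $\deg(v)\le n$ suffices for the good neighbours arising there --- this is Lemma~\ref{lem:cost_branching} together with the paper's charging scheme --- and Lemma~\ref{lem:sum_degrees} is needed only along the chains of unary nodes, where $\good(u)$ is a single on-path vertex. (One could instead rescue a unified argument by charging each good neighbour's operations to a leaf inside \emph{its own} child's subtree, so that every path is charged only for reduced degrees of its own vertices, but that assignment must be stated; it is not what your text describes.) A smaller omission: at a unary node no path finding occurs in that call, so the $O(|\pi_{ut}|)$ component of line~\ref{alg:last_vertex} cannot simply be ``folded into the preceding path finding''; the paper introduces the explicit invariant that every vertex stores its distance to $t$ along the current certificate, maintained under suffix-taking and re-routing, to reduce that line to $O(|N(u)|)$.
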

\begin{proof}
See Section~\ref{sec:amortized-analysis}.
\end{proof}

There are several dynamic data structures in the literature
\cite{Kapron2013} that maintain a spanning forest for a dynamic graph,
supporting insertions and deletions of edges in polylogarithmic
time. Consequently, $t_p = O(n \,\polylog(n))$, $t_q =
O(\polylog(n))$, and $t_u = O(\polylog(n))$, thus giving the following
bound.

\begin{corollary}
  The algorithm takes $\tilde{O}(m + |\setofinducedpaths_{st}(G)| \cdot
  n)$ time to report all the chordless $st$-paths.
\end{corollary}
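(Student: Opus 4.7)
The plan is to derive the corollary directly from Theorem~\ref{the:complexity} by instantiating the three generic subroutine costs $t_p$, $t_q$, $t_u$ with a concrete dynamic graph connectivity data structure. I would begin by fixing the data structure of Kapron, King and Mountjoy~\cite{Kapron2013} (or any comparable polylogarithmic dynamic connectivity structure), which maintains a spanning forest of the current graph under edge insertions and deletions. I would note that this data structure is built once as a preprocessing step over the input graph $G$ in $\tilde{O}(m)$ time, which is absorbed by the additive $O(m)$ term of Theorem~\ref{the:complexity}, and is then reused across all recursive calls that delete and re-insert edges as described in Algorithm~\ref{alg:improved}.

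Next I would identify the three parameter costs. Edge deletions and re-insertions (lines~\ref{alg:update1}, \ref{alg:add}, and \ref{alg:remove} of Algorithm~\ref{alg:improved}) are the native updates of the data structure, giving $t_u = O(\polylog n)$. Connectivity queries between $u$ and $t$ (line~\ref{alg:query}) are answered in $O(\polylog n)$ by looking up the representative of each vertex in the maintained spanning forest, giving $t_q = O(\polylog n)$. For $t_p$, whenever $u$ and $t$ are known to be connected, one can extract a $ut$-path by walking along the unique tree path between them in the current spanning forest; since the spanning forest has at most $n-1$ edges and each traversal step costs $O(\polylog n)$ using the auxiliary structures (e.g., Euler-tour trees) that underlie the data structure, this yields $t_p = O(n\,\polylog n)$.

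Finally I would substitute these three bounds into the expression of Theorem~\ref{the:complexity}:
\[
O\!\left(m + |\setofinducedpaths_{st}(G)|\,(t_p + n t_q + n t_u)\right) = O\!\left(m + |\setofinducedpaths_{st}(G)| \cdot n\,\polylog n\right),
\]
which by definition of $\tilde{O}(\cdot)$ equals $\tilde{O}(m + |\setofinducedpaths_{st}(G)| \cdot n)$, as claimed.

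The main obstacle I foresee is not the arithmetic, but justifying $t_p = O(n\,\polylog n)$ rigorously: one must be sure that the dynamic connectivity structure cited actually exposes (or can easily be augmented to expose) the spanning forest in a way that supports reporting a $ut$-path in time proportional to its length times a polylogarithmic factor, rather than merely answering yes/no queries. If the chosen structure only supports connectivity queries, I would fall back to layering an Euler-tour tree or a top tree on top of the maintained forest, keeping all update and query costs within $O(\polylog n)$ per operation and thus preserving the stated bound.
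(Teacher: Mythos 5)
Your proposal matches the paper's own argument: the paper likewise derives the corollary by plugging $t_p = O(n\,\polylog n)$, $t_q = O(\polylog n)$, and $t_u = O(\polylog n)$ from the dynamic spanning-forest structure of~\cite{Kapron2013} into Theorem~\ref{the:complexity}. Your extra care about whether the structure actually exposes a $ut$-path (not just yes/no connectivity) is a reasonable refinement of a point the paper leaves implicit, but the overall route is the same.
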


\section{Amortized Analysis}
\label{sec:amortized-analysis}

Before starting our analysis, we observe some simple properties of the
recursion tree generated by Algorithm~\ref{alg:improved}.

\begin{fact} 
  \label{fact:tree_prop}
  The recursion tree $R$ of Algorithm~\ref{alg:improved} has the
  following properties:
  \begin{enumerate}
    \item There is a one-to-one correspondence between paths in
      $\setofinducedpaths_{st}(G)$ and leaves in the recursion
      tree. \label{item:R1}
    \item There is a one-to-one correspondence between proper prefixes
      of paths in $\setofinducedpaths_{st}(G)$ and internal nodes in
      the recursion tree. \label{item:R2}
    \item The number of branching nodes is $|\setofinducedpaths_{st}(G)|-1$. \label{item:R3}
    \item The length of a root-to-leaf path is equal to the length of
      the chordless $st$-path corresponding to the leaf. In particular,
      the height of the tree is $\leq n$.
  \end{enumerate}
\end{fact}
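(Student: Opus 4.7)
The plan is to establish the four properties in sequence, drawing on Theorem~\ref{the:correctness} and Lemma~\ref{lem:only_induced} for the underlying correctness of the recursion. First I would prove Property~\ref{item:R1} by observing that a node of $R$ is a leaf if and only if its recursive call hits the base case $u = t$ on line~\ref{alg:output}; the ``only if'' direction uses that whenever $u \neq t$ the while loop inserts at least the successor of $u$ along the $ut$-path certificate into $S$ before $u$ becomes disconnected from $t$, so $|S|\geq 1$ and at least one recursive call is launched at line~\ref{alg:recursive_call}. Theorem~\ref{the:correctness} then supplies the bijection between leaves and elements of $\setofinducedpaths_{st}(G)$.

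Next, for Property~\ref{item:R2}, I would label each internal node by the partial prefix $\pi_{su}$ it receives as a parameter and argue by induction on depth: the prefixes attached to the children of a common parent are pairwise distinct because they differ in their last vertex, chosen from the good-neighbour set $S$ populated by the cleanup loop (whose elements are distinct by construction). Conversely, every proper prefix of a chordless $st$-path appears as such a label, since it is encountered along the unique root-to-leaf path producing that chordless $st$-path guaranteed by Property~\ref{item:R1}.

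Property~\ref{item:R4} is then immediate: each recursive call at line~\ref{alg:recursive_call} extends the current prefix by exactly one new vertex $v$, so descending one edge in $R$ lengthens the attached prefix by one; hence the depth of a leaf equals the number of edges of its corresponding chordless $st$-path, and since such a path has no repeated vertices its length is at most $n-1\leq n$. Finally, I would obtain Property~\ref{item:R3} from the standard tree identity $\sum_{v\,:\,c(v)\geq 2}(c(v)-1) = L-1$, where $c(v)$ denotes the number of children of $v$ and $L$ the number of leaves; combining with Property~\ref{item:R1} gives exactly $|\setofinducedpaths_{st}(G)|-1$.

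The main subtlety I anticipate is in Property~\ref{item:R3}: when $|S|\geq 3$ a branching node has more than two children, so a literal count of nodes with $c(v)\geq 2$ can be strictly smaller than $L-1$. The statement must therefore be read with the natural multiplicity convention, counting a branching node with weight $c(v)-1$ (the number of additional leaves it separates); with that reading the identity above delivers the claimed $|\setofinducedpaths_{st}(G)|-1$ exactly, and this is precisely what the amortized analysis in Section~\ref{sec:amortized-analysis} needs to charge costs against distinct output paths. The other three properties are structural and reduce to bookkeeping on the recursion.
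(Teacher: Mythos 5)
The paper states Fact~\ref{fact:tree_prop} without any proof, so there is no argument of record to compare against; your proposal supplies the missing justification and is essentially correct. Items (1), (2) and (4) are handled the natural way: the observation that a call with $u\neq t$ always completes at least one iteration of the while loop before the connectivity test can trigger the break (so $S\neq\emptyset$ and the node is internal), combined with Theorem~\ref{the:correctness}, gives the leaf correspondence; the pairwise distinctness of the children's last vertices plus induction on depth gives the prefix correspondence; and the one-vertex-per-level extension gives the depth claim. Your remark on item (3) is a genuine and worthwhile catch: as literally stated the count is wrong whenever some branching node has three or more children (e.g.\ $s$ adjacent to three vertices, each also joined to $t$ and with no other edges, yields one branching node but three leaves), and the correct statement is either the inequality ``at most $|\setofinducedpaths_{st}(G)|-1$'' or the weighted identity $\sum_{r:\,\beta(r)\ge 2}(\beta(r)-1)=|\setofinducedpaths_{st}(G)|-1$. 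Either reading suffices for Section~\ref{sec:amortized-analysis}, since the branching cost there is in any case charged per child via $\beta(r)$ rather than per branching node. One tiny imprecision in your item (1): the first vertex inserted into $S$ is the neighbour of $u$ on $\pi_{ut}$ \emph{closest to $t$}, not necessarily the successor of $u$ on the certificate; this does not affect your argument, which only needs $|S|\ge 1$.
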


\noindent Fact~\ref{fact:tree_prop} suggests us to follow the following  overall strategy.
\begin{enumerate}
  \item We analyze the cost of each type (leaf, unary and branching)
    of node separately.
  \item We consider all branching nodes together, and show that their
    amortized cost is $O(t_p + t_q + nt_u + n) = \tilde{O}(n)$ per
    solution.
  \item We consider all unary nodes together, and show that their
    amortized cost is $O(|\pi_{st}|t_q + nt_u) = \tilde{O}(n)$ per
    solution.
  \item We deduce that the cost of each solution is $O(t_p + nt_q +
    nt_u) = \tilde{O}(n)$.
\end{enumerate}

\noindent Where the cost of a node is the time spent by the
corresponding call without including the time spent by its nested
recursive calls.

\begin{lemma} \label{lem:cost_leaf}
  The cost of a leaf is $O(|\pi_{st}|)$.
\end{lemma}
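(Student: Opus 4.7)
The plan is to observe that a leaf of the recursion tree corresponds exactly to a call in which the first branch of the \textbf{if} statement is taken, i.e.\ where $u = t$. This is guaranteed by Fact~\ref{fact:tree_prop} (item~\ref{item:R1}): leaves are in bijection with chordless $st$-paths, and only a call with $u=t$ can actually output one and terminate without spawning children. Any call with $u \ne t$ proceeds into the \textbf{else} branch and enters the \texttt{while} loop, which either discovers at least one good neighbour (producing a child) or contradicts the invariant that $\pi_{ut}$ is a valid certificate.

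Given this, I would simply inspect the body of Algorithm~\ref{alg:improved} under the assumption $u=t$. The only line executed is line~\ref{alg:output}, which outputs $\pi_{su}$. At this point $u = t$, so $\pi_{su} = \pi_{st}$ is the chordless $st$-path corresponding to this leaf. Writing out the path takes time proportional to the number of edges (equivalently vertices) it contains, hence $O(|\pi_{st}|)$.

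No other work is charged to the leaf: the work done to build $\pi_{su}$ incrementally was already accounted for at the ancestor branching/unary nodes when the successive good neighbours were appended in line~\ref{alg:recursive_call}, and by our bookkeeping convention (the cost of a node excludes the cost of its nested recursive calls) it is not counted again here. I do not foresee any obstacle; the statement is essentially a direct reading of the algorithm, and will serve as the base case for the more substantial amortized arguments concerning unary and branching nodes that follow.
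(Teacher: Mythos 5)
Your proposal is correct and matches the paper's argument: a leaf is a call with $u=t$, whose only operation is to output $\pi_{su}=\pi_{st}$ in time $O(|\pi_{st}|)$. The additional remarks on the bijection with leaves and on the cost convention are consistent with the paper and do not change the argument.
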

\begin{proof}
  Clearly, when $u = t$, the only operation done by the algorithm is
  to output $\pi_{st}$, which takes $O(|\pi_{st}|)$ time. \qed
\end{proof}

Let us now analyze the cost of the unary nodes. Let $r = \langle
\pi_{su}, u, \pi_{ut}, G \rangle$ be a unary node. The vertex $v \in
N(u)$ is the only neighbour of $u$ that can extend the prefix
$\pi_{su}$ into a chordless $st$-path. Thus, removing $v$ from $G$
disconnects $u$ from $t$, and the algorithm performs a single
iteration of the loop in line~\ref{alg:loop:begin}, not executing
line~\ref{alg:path}. In this case, the algorithm performs the
following operations: (i) one connectivity query
(line~\ref{alg:query}), (ii) $|N(v)|$ edge update operations on $G$
(lines~\ref{alg:update1}, \ref{alg:add} and \ref{alg:remove}), and
(iii) a scan in the intersection of $N(u)$ and $\pi_{ut}$ to find $v$
(line~\ref{alg:last_vertex}). The cost of (i) and (ii) is $O(t_q +
|N(v)|t_u)$.

A naive implementation of (iii) takes $O(|N(u)| + |\pi_{ut}|)$ time,
which is too large to fit in our amortization strategy. In order to
reduce this cost to $O(|N(u)|)$ we therefore maintain, as an extra
invariant, for each vertex in the current graph its distance to $t$
along the path $\pi_{ut}$. In this way, we can find $v$ simply
scanning $N(u)$. Thus, assuming the distance information is correctly
maintained, we complete the proof of Lemma~\ref{lem:cost_unary}.

\begin{lemma} \label{lem:cost_unary}
  The cost of a unary node is $O(t_q + |N(v)|t_u + |N(u)|)$, where
  $(u,v)$ is the edge added to the chordless path.
\end{lemma}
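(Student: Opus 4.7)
The plan is to directly verify the three cost components already itemised above the lemma statement, granted the invariant that each vertex in the current graph stores its distance to $t$ along $\pi_{ut}$.

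First I would observe that because the node is unary, the while loop at lines~\ref{alg:loop:begin}--\ref{alg:loop:end} executes exactly one iteration: after removing the unique good neighbour $v$, the connectivity test at line~\ref{alg:query} fails and the loop breaks, so line~\ref{alg:path} is never reached. Therefore the work done at this recursion node splits cleanly into: (i) one connectivity query, contributing $O(t_q)$; (ii) the edge updates around $v$, namely the removal at line~\ref{alg:update1}, the re-insertion at line~\ref{alg:add}, and the final deletion at line~\ref{alg:remove}, which together touch each edge incident to $v$ a constant number of times, contributing $O(|N(v)|\,t_u)$; and (iii) the selection of $v$ at line~\ref{alg:last_vertex}.

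For (iii), the naive implementation which walks $\pi_{ut}$ while testing membership in $N(u)$ costs $O(|N(u)| + |\pi_{ut}|)$, and the $|\pi_{ut}|$ term is what we cannot afford. Using the maintained distance-to-$t$ labels, however, I would simply scan $N(u)$ once, keeping the neighbour that lies on $\pi_{ut}$ and has the smallest distance-to-$t$ label; this is the vertex of $\pi_{ut} \cap N(u)$ closest to $t$ on $\pi_{ut}$, i.e.\ precisely $v$. The scan performs $O(1)$ work per neighbour of $u$ and therefore runs in $O(|N(u)|)$ time.

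Adding the three contributions gives $O(t_q + |N(v)|\,t_u + |N(u)|)$, which is exactly the claimed bound. The only non-routine point is that the distance labels actually encode what we need, but this is straightforward: a vertex whose label is defined lies on $\pi_{ut}$ and the label equals its position from $t$, so comparing labels is equivalent to comparing positions along $\pi_{ut}$. The main obstacle, which is not part of this lemma but is assumed by its hypothesis, is showing that the distance invariant can be updated across recursive calls without exceeding the time budget; that argument is deferred, since here we only need to conclude the per-node cost.
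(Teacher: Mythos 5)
Your proposal is correct and follows essentially the same route as the paper: a single loop iteration for a unary node, the same three-way cost decomposition into the connectivity query, the $O(|N(v)|)$ edge updates, and the selection of $v$, with the distance-to-$t$ labels used to replace the naive $O(|N(u)|+|\pi_{ut}|)$ scan by an $O(|N(u)|)$ scan. The paper likewise defers the maintenance of the distance invariant to the surrounding discussion, so nothing is missing.
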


It is not hard to maintain the distance information for $\langle
\pi_{su}\cdot (u,v), v, \pi_{vt}, G'\rangle$, the only child of the
unary node $\langle \pi_{su}, u, \pi_{ut}, G \rangle$. As the path
$\pi_{vt}$ is a suffix of $\pi_{ut}$, the distance of the vertices in
$\pi_{ut}$ does not change. On the other hand, the only vertices that
the distances can change are the ones in $\pi_{vt}$ but not in
$\pi_{ut}$.  These vertices can be identified when scanning $N(v)$ in
the child node $\langle \pi_{su}\cdot (u,v), v, \pi_{vt}, G'\rangle$,
since their distance is strictly larger than $|\pi_{vt}|$. It remains
to show that the distance information can be maintained in the
branching nodes.

\begin{lemma} \label{lem:cost_branching}
  The cost of a branching node $r \in R$ is $O(\beta(r)(t_p + t_q +
  nt_u))$, where $\beta(r)$ is the number of children of $r$.
\end{lemma}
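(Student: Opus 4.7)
The plan is to split the work at the branching node $r$ into the two loops of Algorithm~\ref{alg:improved} and bound each separately, exploiting that $|S|=\beta(r)\ge 2$ at a branching node so that both the while loop at lines~\ref{alg:loop:begin}--\ref{alg:loop:end} and the foreach loop at lines~\ref{alg:recursion:begin}--\ref{alg:recursion:end} execute exactly $\beta(r)$ iterations.

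First I would tally the per-iteration cost of the while loop. Assuming the distance-to-$t$ labels along the current $\pi_{ut}$ are available (as maintained throughout the algorithm), line~\ref{alg:last_vertex} becomes an $O(|N(u)|)=O(n)$ scan picking the neighbour of largest distance. The deletion of $v$ with its up-to-$n$ incident edges costs $O(nt_u)$, the connectivity query at line~\ref{alg:query} costs $t_q$, and the path recomputation at line~\ref{alg:path} costs $t_p$. Extracting the suffix $\pi_{vt}$ and refreshing the distance labels along the new $\pi_{ut}$ is another $O(n)$. Summing gives $O(t_p+t_q+nt_u)$ per iteration and $O(\beta(r)(t_p+t_q+nt_u))$ in total. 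Next I would handle the foreach loop, where each iteration re-inserts $v$ and its edges, performs the recursive call (whose cost is excluded from $r$'s own cost by convention), and then re-removes $v$, costing $O(nt_u)$ per iteration and $O(\beta(r)\,nt_u)$ overall, which is absorbed into the target bound.

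The main obstacle, flagged in the paragraph preceding the lemma, is to show that the distance labels on which the $O(|N(u)|)$ scan relies can actually be kept up to date within this same budget, not only for $r$ but across the children as well. My plan is to attach the labels to each stored path: whenever a fresh $\pi_{ut}$ is produced at line~\ref{alg:path}, I spend $O(|\pi_{ut}|)=O(n)$ time to invalidate the outgoing labels and install the incoming ones, which sits inside the $O(nt_u)$ already charged in the corresponding while iteration; and whenever a suffix $\pi_{vt}$ is placed into $S$, I freeze the labels of its vertices together with it. In the foreach loop, immediately before each recursive call I reinstall the frozen labels of $\pi_{vt}$ in $O(|\pi_{vt}|)=O(n)$ time and restore the state symmetrically on return, both operations fitting inside the $O(nt_u)$ bracket of the edge add/remove around the call. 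Combining the two loop contributions then yields the claimed $O(\beta(r)(t_p+t_q+nt_u))$ bound.
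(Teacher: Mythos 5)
Your proposal is correct and follows essentially the same route as the paper: count the $\beta(r)=|S|$ iterations of the while loop, bound each by $O(t_p+t_q+nt_u)$ (an $O(n)$ scan/traversal for finding $v$ and the suffix, $O(nt_u)$ for edge updates, plus one connectivity query and one path computation), and absorb the foreach loop's add/remove work into $O(\beta(r)\,nt_u)$. Your handling of the distance labels matches what the paper does in the paragraph following the lemma, namely piggybacking the $O(|\pi_{ut}|)$ relabelling on the traversal already paid for in each iteration and refreshing labels before each recursive call.
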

\begin{proof}
  The cost of a branching node $r = \langle \pi_{su}, u, \pi_{ut}, G
  \rangle$ is dominated by the cost of the loop of
  line~\ref{alg:loop:begin}. The number of iterations of the loop is
  equal to the number of neighbours of $u$ that can extend $\pi_{su}$
  into a chordless $st$-path, which is exactly the number of vertices
  in $S$ after the loop finishes, \ie $\beta(r)$, the number of
  children of $r$. Let us now bound the cost of each iteration. The
  cost of lines~\ref{alg:last_vertex} and \ref{alg:path_suffix} is
  bounded by $O(|N(u)| + |\pi_{ut}|) = O(n)$: we simply have to
  traverse the path $\pi_{ut}$ and scan the set $N(u)$. The cost of
  updating $G$ is bounded by $O(nt_u)$. Finally, the cost for the
  connectivity query and to find a path is $t_p + t_q$. Hence, the
  total cost for a branching node is $O(\beta(r)(t_p + t_q +
  nt_u))$. \qed
\end{proof}

Let us now show that we can maintain the distance information in
branching nodes in the same time bound of
Lemma~\ref{lem:cost_branching}. This follows from the fact that in
each iteration of the loop (line~\ref{alg:loop:begin}) we are already
paying $O(|\pi_{ut}|)$, \ie a full traversal of the path $\pi_{ut}$.
Before each recursive call in line~\ref{alg:recursive_call} we can
traverse the path $\pi_{ut}$ adding for each vertex the distance
information, \ie their position in the path.

At this point we have bounds for the cost of each node in the
recursion tree. However, by directly applying them we cannot achieve
our goal of $\tilde{O}(n)$ time per solution. For instance, consider
the particular case where all internal nodes of the recursion tree are
branching. The cost of each internal node is $O(\beta(r)(t_p + t_q +
nt_u)) = \tilde{O}(n^2)$, since $\beta(r) = \Omega(n)$ in the worst
case. Then, from item~\ref{item:R3} of Fact~\ref{fact:tree_prop}, the
number of branching nodes is $|\setofinducedpaths_{st}(G)|-1$.  The
total cost for the tree is thus
$\tilde{O}(|\setofinducedpaths_{st}(G)|\cdot n^2)$ or $\tilde{O}(n^2)$ per
solution.

In order to get a tighter bound for the total cost of the branching
nodes, we use the following amortization strategy. Let $r \in R$ be a
branching node. We divide the cost $O(\beta(r)(t_p + t_q + nt_u))$
among the closest descendents that are branching nodes or leaves (no
unary nodes), each being charged $O(t_p + t_q + nt_u)$. This can
always be done since $r$ has $\beta(r)$ children and the subtree of
each child contains at least one leaf, \ie the node $r$ has at least
$\beta(r)$ non-unary descendants. In this way, the original cost of
node $r$ is completely charged to its non-unary descendants, and the
only cost that remains associated to $r$ is the one received from its
ancestors. Finally, each branching node can only be charged once, by
its lowest non-unary ancestor. Each branching node and each leaf is
therefore charged with $O(t_p + t_q + nt_u)$. Thus, the total cost of
the branching nodes is $O(|\setofinducedpaths_{st}(G)|(t_p + t_q +
nt_u))$, completing the proof of Lemma~\ref{lem:branching_total_cost}.

\begin{lemma} \label{lem:branching_total_cost}
  $\sum_{r: \mathrm{branching}} T(r) =
  O(|\setofinducedpaths_{st}(G)|(t_p + t_q + nt_u))$.
\end{lemma}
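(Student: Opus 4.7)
The plan is to formalize the charging scheme sketched in the paragraph preceding the lemma. By Lemma~\ref{lem:cost_branching}, each branching node $r$ has cost $T(r) = O(\beta(r)(t_p + t_q + nt_u))$, so the task reduces to bounding $\sum_r \beta(r)$ over all branching $r$ and distributing this sum across at most $O(|\setofinducedpaths_{st}(G)|)$ recipients, each receiving an $O(1)$ share.

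The first step is to define, for each branching node $r$, its \emph{non-unary frontier} $F(r)$: the set of proper descendants of $r$ obtained by walking down from $r$ along each branch and stopping at the first leaf or branching node encountered (skipping unary nodes). The frontiers rooted at distinct children of $r$ are vertex-disjoint, and every downward path in the finite recursion tree $R$ eventually hits a leaf, so each of the $\beta(r)$ children contributes at least one node to $F(r)$. This gives $|F(r)| \geq \beta(r)$.

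The second step is the actual charging. I would split $T(r) = O(\beta(r)(t_p + t_q + nt_u))$ evenly across the $|F(r)| \geq \beta(r)$ frontier nodes, so each node in $F(r)$ absorbs only $O(t_p + t_q + nt_u)$ from $r$. The key observation is that a non-unary node $q$ lies in $F(r)$ if and only if $r$ is the \emph{lowest} branching ancestor of $q$, so $q$ receives charge from at most one branching ancestor. Hence the total charge accumulated at any single leaf or branching node across the entire tree is $O(t_p + t_q + nt_u)$.

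The final step is to count recipients using Fact~\ref{fact:tree_prop}: there are exactly $|\setofinducedpaths_{st}(G)|$ leaves (item~\ref{item:R1}) and $|\setofinducedpaths_{st}(G)| - 1$ branching nodes (item~\ref{item:R3}), giving $O(|\setofinducedpaths_{st}(G)|)$ non-unary nodes in total. Summing the $O(t_p + t_q + nt_u)$ charge over all non-unary nodes, and adding back the $O(1)$ charge absorbed by the root's descendants (which is dominated), yields $\sum_{r~\mathrm{branching}} T(r) = O(|\setofinducedpaths_{st}(G)|(t_p + t_q + nt_u))$. The only delicate point is verifying disjointness of the per-child frontiers so that $|F(r)| \geq \beta(r)$ really holds; this is immediate from the tree structure and is what makes the even distribution legitimate.
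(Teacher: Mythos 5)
Your proposal is correct and follows essentially the same charging argument as the paper: distribute each branching node's $O(\beta(r)(t_p+t_q+nt_u))$ cost over its closest non-unary descendants (at least $\beta(r)$ of them, one per child's subtree), observe that each leaf or branching node is charged at most once by its lowest branching ancestor, and count recipients via Fact~\ref{fact:tree_prop}. Your explicit verification that the per-child frontiers are disjoint is a welcome formalization of a step the paper leaves implicit, but it is not a different proof.
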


Let us now bound the total cost of the unary nodes. Similarly to the
branching nodes case, a straightforward use of the bound given by
Lemma~\ref{lem:cost_unary} leads to an $\tilde{O}(n^2)$ cost per
solution, since in the worst case the recursion tree can have $O(n)$
unary nodes for each leaf. The key idea to obtain a better amortized
cost is to consider the bound on the reduced degrees given by
Lemma~\ref{lem:sum_degrees}.

We first observe that each unary node is contained in some
root-to-leaf path $\Pi(l)$, where $l$ is a leaf of the recursion
tree. Thus,
\begin{equation}
  \sum_{r: \mathrm{unary}} T(r) \leq \sum_{l: \mathrm{leaf}} \sum_{r \in \Pi(l)} T(r).
\end{equation}

Fact~\ref{fact:tree_prop} implies that there is a one-to-one
correspondence between the prefixes of paths in
$\setofinducedpaths_{st}(G)$ and nodes in the recursion tree. That is,
each leaf corresponds to a solution, and the root-to-leaf path
$\Pi(l)$ corresponds to the chordless $st$-path associated to the leaf
$l$. Moreover, the $O(t_q + |N(v)|t_u + |N(u)|)$ cost of an unary node
can be amortized to $O(t_q + |N(v)|t_u)$, since we can always charge
$|N(u)| = O(n)$ to its single child. We can thus rewrite the double
sum as
\begin{equation} \label{eq:sum_unary_cost}
  \sum_{l: \mathrm{leaf}} \sum_{r \in \Pi(l)} T(r) = 
  \sum_{\pi \in \setofinducedpaths_{st}(G)} \sum_{v_i \in \pi} (t_q +
  |N(v_i)|t_u). 
\end{equation}

For each chordless $st$-path $\pi$ in the internal sum of
Eq.~\ref{eq:sum_unary_cost}, we have that the degrees are actually the
reduced degrees of Section~\ref{sub:reduced-degree}, since the good
neighbours (\ie the set $S$ in Algorithm~\ref{alg:improved}) are
always removed. Using Lemma~\ref{lem:sum_degrees} we can thus bound
the sum of the degrees by $2n$. Therefore,
\begin{equation}
  \sum_{r: \mathrm{unary}} T(r) \leq \sum_{\pi \in
    \setofinducedpaths_{st}(G)} (|\pi|t_q + 2nt_u),
\end{equation}
completing the proof of Lemma~\ref{lem:unary_total_cost}.

\begin{lemma} \label{lem:unary_total_cost}
  $\sum_{r: \mathrm{unary}} T(r) = O(\sum_{\pi \in \setofinducedpaths_{st}(G)}
  |\pi|t_q + nt_u)$.
\end{lemma}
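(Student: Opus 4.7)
My plan is to follow the amortization scheme sketched in the discussion preceding the lemma. Every unary node $r$ lies on exactly one root-to-leaf path $\Pi(l)$, so I can first bound $\sum_{r:\mathrm{unary}} T(r)$ above by $\sum_{l:\mathrm{leaf}} \sum_{r \in \Pi(l) \cap \mathrm{unary}} T(r)$. By Lemma~\ref{lem:cost_unary}, a unary node whose only child extends the path by the edge $(u,v)$ costs $O(t_q + |N(v)|t_u + |N(u)|)$. The additive $|N(u)|$ term is problematic if summed naively along the root-to-leaf path, so the first amortization step is to push this $O(n)$ charge downward onto the unique child (each node of the recursion tree therefore receives at most one such charge, contributing $O(n)$ extra). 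After this pushdown, the effective cost per unary node shrinks to $O(t_q + |N(v)|t_u)$.

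Next I would invoke Fact~\ref{fact:tree_prop}, items~\ref{item:R1}--\ref{item:R2}, to identify root-to-leaf paths in $R$ with chordless $st$-paths $\pi \in \setofinducedpaths_{st}(G)$ and nodes on $\Pi(l)$ with prefixes of the associated $\pi$. This allows me to rewrite the double sum as
\begin{equation}
\sum_{r:\mathrm{unary}} T(r) \;\leq\; \sum_{\pi \in \setofinducedpaths_{st}(G)} \sum_{v_i \in \pi} \bigl(t_q + |N(v_i)|\, t_u\bigr),
\end{equation}
where the inner $|N(v_i)|$ refers to the neighbourhood of $v_i$ \emph{in the graph at the moment the corresponding recursive call is executed}, not in the original $G$.

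The crucial step, and the one I expect to require the most care, is to argue that the $|N(v_i)|$ appearing above is precisely the reduced degree $d_i$ of $v_i$ in $\pi$ from Section~\ref{sub:reduced-degree}. The reason is structural: along the prefix $v_0 v_1 \ldots v_i$ of $\pi$, every ancestor call on the path $\Pi(l)$ has already executed its cleanup loop, and then its \textbf{foreach} loop (lines~\ref{alg:recursion:begin}--\ref{alg:recursion:end}) with the invariant enforced by lines~\ref{alg:add} and~\ref{alg:remove}, so at the moment the call at $v_i$ runs, all good neighbours $\good(v_j)$ for $j < i$ have been removed from $G$ (except for the single one that lies on $\pi$, namely $v_{j+1}$, which is on the path and thus counted). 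Hence $|N(v_i)|$ in $G$ at this point equals $|(N(v_i)\setminus \bigcup_{j\le i}\good(v_j))\cup\{v_{i+1}\}| = d_i$.

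Finally, applying Lemma~\ref{lem:sum_degrees} gives $\sum_{v_i \in \pi} d_i \leq 2n$ for each chordless $\pi$, and therefore
\begin{equation}
\sum_{r:\mathrm{unary}} T(r) \;=\; O\!\left(\sum_{\pi \in \setofinducedpaths_{st}(G)} \bigl(|\pi|\, t_q + n\, t_u\bigr)\right),
\end{equation}
as claimed. The main obstacle is really the bookkeeping in the preceding paragraph, namely verifying that the state of $G$ during the unary call at $v_i$ matches the definition of reduced degree; once that is established, the rest is a direct application of Lemma~\ref{lem:sum_degrees} together with the leaf/prefix correspondence.
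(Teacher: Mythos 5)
Your proof is correct and follows essentially the same route as the paper: bound the unary cost by the sum over root-to-leaf paths, charge the additive $|N(u)|$ scan term away, identify the remaining degrees with the reduced degrees of Section~\ref{sub:reduced-degree} (since the good-neighbour sets $S$ of all ancestors have already been removed from $G$ when the call at $v_i$ executes), and conclude via Lemma~\ref{lem:sum_degrees}. Your explicit justification of why $|N(v_i)|$ in the current graph matches the reduced degree $d_i$ is in fact slightly more careful than the paper's one-line assertion of the same fact.
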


As a corollary of Lemmas~\ref{lem:unary_total_cost} and
\ref{lem:branching_total_cost}, we obtain Theorem~\ref{the:complexity}.

\end{document}